\definecolor{equationcolor}{RGB}{222,94,100}
\definecolor{alecolor}{RGB}{238,33,80}
\pgfplotsset{compat=1.18} 
\DeclareFontFamily{U}{mathb}{\hyphenchar\font45}
\DeclareFontShape{U}{mathb}{m}{n}{
	<-6> mathb5 <6-7> mathb6 <7-8> mathb7
	<8-9> mathb8 <9-10> mathb9
	<10-12> mathb10 <12-> mathb12
}{}
\DeclareSymbolFont{mathb}{U}{mathb}{m}{n}
\DeclareMathSymbol{\ggcurly}{\mathrel}{mathb}{"CF}
\def\blfootnote{\gdef\@thefnmark{}\@footnotetext}
\theoremstyle{plain}
\newtheorem{thm}{Theorem}
\newtheorem{lem}[thm]{Lemma}
\newtheorem{cor}[thm]{Corollary}
\def\>{\rangle}
\def\<{\langle}
\newlength\myindent
\begin{document}
\newcommand{\lanl}{Theoretical Division (T4), Los Alamos National Laboratory, Los Alamos, New Mexico 87545, USA.}
\newcommand{\rochestar}{Department of Physics and Astronomy, University of Rochester, Rochester, New York 14627, USA}
\newcommand{\iitj}{Department of Physics, Indian Institute of Technology Jodhpur, Jodhpur 342030, India}
\author{Tanmoy Biswas}
\email{tanmoy.biswas23@lanl.gov}
\affiliation{\lanl}
\author{Chandan Datta}
\affiliation{\iitj}
\author{Luis Pedro Garc\'ia-Pintos}
\affiliation{\lanl}

\title{
Quantum thermodynamic advantage in work extraction from steerable quantum correlations
}

\date{\today}
\begin{abstract}
     
Inspired by the primary goal of quantum thermodynamics---to characterize quantum signatures and leverage their benefits in thermodynamic scenarios---, we design a work extraction task within a bipartite framework that exhibits a quantum thermodynamic advantage. The steerability of quantum correlations between the two parties is the key resource enabling such an advantage. In designing the task, we exploit the correspondence between steerability and the incompatibility of observables. Our work extraction protocol involves mutually unbiased bases, which exhibit maximum incompatibility and therefore maximum steerability, showcased in maximally entangled quantum states. The work extraction protocols consist of quenches and thermalization processes, which serve as fundamental building blocks for various thermodynamic protocols, such as heat engines.
We derive upper bounds on the extractable work for unsteerable and steerable correlations and devise a protocol that saturates the latter. The ratio between the extractable work in steerable and unsteerable scenarios, which encapsulates the quantum advantage, increases with the dimension of the underlying system (sometimes referred to as an unbounded advantage). This proves a quantum thermodynamic advantage arising from steerable quantum correlations.
\end{abstract}
\maketitle
\section{Introduction}

Exploiting quantum correlations between spatially separated systems is a cornerstone of the second quantum revolution \cite{dowling2003quantum, Deutsch_PRXQ,Aspect2023}. 
Characterizing these correlations not only deepens our understanding of the fundamental nature of the universe, but also paves the way for the development of novel technologies that can significantly outperform their classical counterparts~\cite{Bennett_teleportation,Ekert_QKD,Bouwmeester1997,Bao2012,ExptEntangled,Gröblacher2007}. Quantum steering refers to quantum correlations in which a spatially separated observer can influence the set of conditional quantum states accessible to a distant party through uncharacterized measurements in a way that cannot be explained by local causal influences~\cite{Wiseman_steering, SteeringRPP, SteeringRevModPhys}. (``Uncharacterized measurements'' assume that the party is unaware of the specific measurement operator being applied and, therefore, does not have access to the corresponding conditional states.).  
When both parties use uncharacterized measurements, this scenario aligns with the device-independent framework of a standard Bell test~\cite{Brunner_nonlocality_review, ScaraniBook, ScaraniOutlook, Acin2016, MassarPironio}. Quantum steering exploits fundamentally quantum correlations with no classical counterpart and is recognized as part of a broader family of non-classical correlations.  Quantum correlations in entangled states that violate Bell inequalities also exhibit steerability; however, the converse does not hold. 
Steerable states lie between entangled ones and states that can be used to violate Bell inequalities: all states that violate a Bell inequality are steerable, and all steerable states are entangled (but, not all entangled states violate a Bell inequality).  

The concept of quantum steering was first introduced by Schrodinger in response to the Einstein-Podolsky-Rosen (EPR) paradox~\cite{EPR_paradox,Schrodinger_1935,Schrodinger_1936}. The modern formulation of quantum steering within the framework of quantum information theory, as presented in Ref.~\cite{Wiseman_steering}, serves as a method to verify entanglement when an agent Alice uses uncharacterized measurement devices. Quantum steering provides significant advantages in various information-theoretic tasks, including quantum key distribution \cite{Branciard_QKD,Ma2012}, quantum key authentication \cite{Mondal_QKA}, randomness certification \cite{Law_2014, Passaro_2015}, sub-channel discrimination \cite{Piani2009,Piani_subchannel}, distinguishing quantum measurements \cite{Datta_2021}, characterization of quantum measurement incompatibility \cite{Chen_incompatibility, Cavalcanti_incompatibility}, secret sharing, \cite{Xiang_secret_sharing,Kogias2015}, quantum metrology \cite{Yadin_metrology}, and quantum teleportation \cite{Reid2013, He2015,Pirandola2015, Cavalcanti_teleportation}. 


In this paper, we prove that quantum steering serves as a valuable resource in quantum thermodynamics, enabling a quantum advantage in work extraction that grows with the dimension of the underlying system. Building on the central objective of quantum thermodynamics, exploring and characterizing quantum signatures within thermodynamic contexts~\cite{OppHorodecki3,Francica_ergotropy_coherence,Francica_ergotropy_discord,Francica_correlations_ergotropy,Lostaglio_LR_PRL,Steering_Engines,Steering_Engines_expt, Touil_2022,hsieh2024generalquantumresourcesprovide,Biswas2022extractionof,BiswasPRL,BiswasPRE,Hsieh_PRL}, we introduce a work extraction task in a bipartite setting. 
Inspired by Maxwell's demon~\cite{maxwell1871theory,szilard1929entropy,Maxwell_demon_review}, Landauer's principle~\cite{Landauers_principle,Landauers_principle2,Sagawa1,Sagawa2}, and recent work that explores measurement incompatibility in a thermodynamic context~\cite{Hsieh_PRL}, we design a work extraction task where Bob seeks to extract work from his quantum system, guided by measurement and feedback from a distant party, Alice. 
Our findings reveal that substantially more work can be extracted when both systems share steerable correlations (e.g., in a maximally entangled state), compared to scenarios where such steerability is absent.

We prove that the ratio of extracted work, with and without steerable correlations, scales as \(\mathcal{O}(\sqrt{d})\), when the dimension \(d\) of the underlying quantum system is an integer power of a prime number~\footnote{This constraint on dimensionality arises from the existence of $d+1$ mutually unbiased bases for a dimension equals to integer power of a prime number.}. This result implies a quantum thermodynamic advantage in work extraction. 
To the best of our knowledge, this is the first proof of a quantum advantage in the extractable work in a thermodynamic context where the advantage grows with the dimension of the underlying system.
Such a quantum thermodynamic advantage also enables us to interpret the work extracted from our proposed task as a witness of steerability, which is sufficient to certify the entanglement of the shared quantum state. 


\begin{figure*}[htbp]
    \centering
    \includegraphics[width=17.5cm]{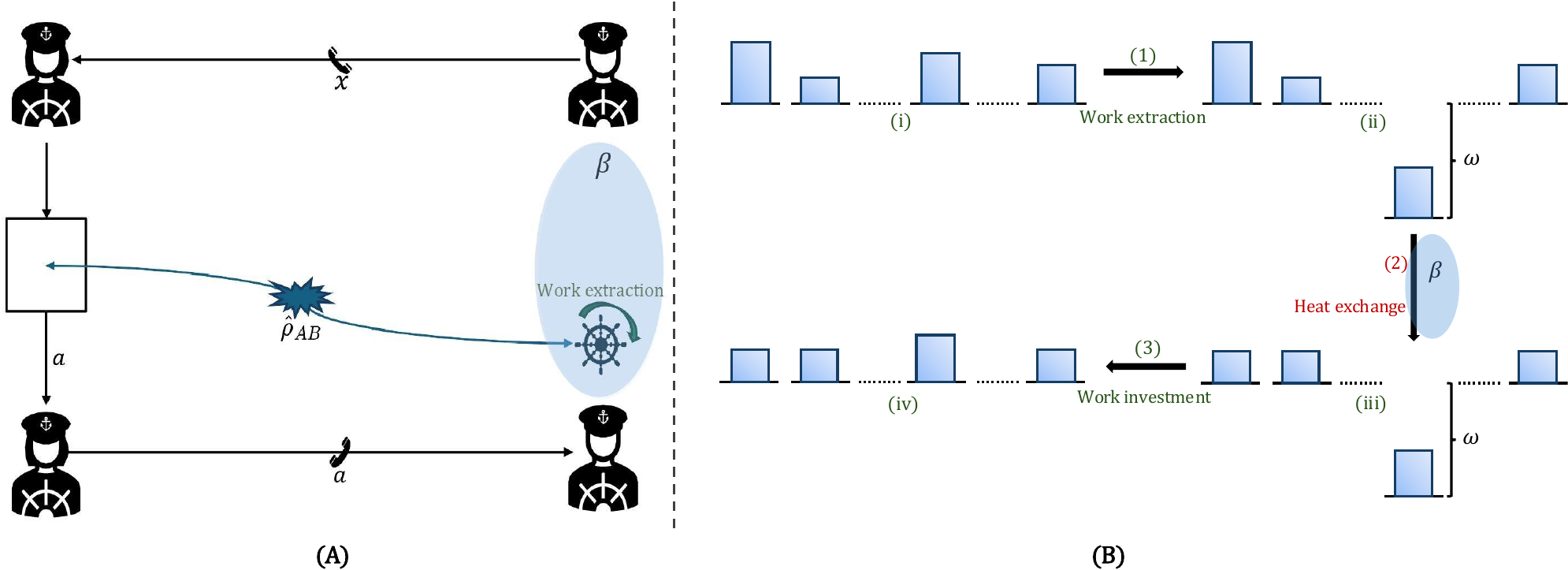}
    \caption{\label{fig:Steering_task} Panel (A) illustrates the work extraction task. Bob selects a random input \( x \) uniformly from \( \mathcal{X} \) and communicates it to Alice, who then performs a measurement on her half of the system and sends the outcome \( a \) back to Bob. Based on \( a \) and \( x \), Bob executes the work extraction protocol shown in Panel (B). Due to Alice's measurement, Bob obtains the state \(\hat{\rho}_{a|x}\), initially associated with a trivial Hamiltonian, as depicted in stage (i). The solid lines represent the energy levels of the Hamiltonian at each stage, while the rectangles indicate the population associated with these energy levels. To extract work, Bob applies a quench that changes the Hamiltonian to \( H_{a|x} \), as shown in stage (ii), without interacting with the heat bath (In the picture above, \( H_{a|x} \) resembles the Hamiltonian given in Eq. \eqref{MUB_Hamiltonian_main}, though, in general, it can represent any non-trivial Hamiltonian.). He then thermalizes the system by coupling it to the heat bath, transforming \(\hat{\rho}_{a|x}\) into the thermal state \(\hat{\gamma}_{a|x}\), as illustrated in stage (iii), which involves heat exchange. Finally, Bob restores the Hamiltonian to its initial value by applying another quench, requiring work investment, as depicted in stage (iv). On average, the extractable work is $|\mathcal{X}|^{-1} \sum_{a, x} p(a|x) \left[-\Tr(\hat{\rho}_{a|x} H_{a|x}) + \Tr(\hat{\gamma}_{a|x} H_{a|x})\right]$. The extractable work differs if Alice and Bob's systems share steerable quantum correlations or not.}
\end{figure*}
\section{Main results}

\subsection{Work extraction task} 

We consider a $d$-dimensional system, representing the working body, with a trivial Hamiltonian ($H=0$). Bob, assisted by Alice, aims to extract work from this quantum system. Bob has access to a fixed set of Hamiltonians to which he can quench, as well as a heat bath at temperature \( T \). We denote the initial state of the combined system shared between Alice and Bob by \( \hat{\rho}_{AB} \). The initial marginal state of Bob's system is $\Tr_{A}\left(\hat{\rho}_{AB}\right)$.  The work extraction task, illustrated in Fig. \ref{fig:Steering_task}, goes as follows:  
\begin{enumerate}
    \item Bob samples an input \( x \) uniformly from the set \( \mathcal{X} \) and transmits it to Alice through a classical communication channel.
    \item  Upon receiving \( x \), Alice performs a measurement on her subsystem, described by a positive operator-valued measure (POVM) \( M_x^a \). She communicates the measurement outcome \( a \) to Bob. Performing a measurement on Alice's subsystem updates the state of Bob's subsystem to \( \hat{\rho}_{a|x} \), given by  
   \begin{eqnarray}\label{assemblage}
     &&\hat{\rho}_{a|x} = \frac{\rho_{a|x}}{p(a|x)}, \quad \text{where} \quad\rho_{a|x}=\Tr_A\left[(M_x^a \otimes \mathbb{I}_B) \hat{\rho}_{AB}\right].
     \nonumber\\&&\text{and}\quad p(a|x) = \Tr\left[(M_x^a \otimes \mathbb{I}_B) \hat{\rho}_{AB}\right].
   \end{eqnarray}
   Here $p(a|x)$ represents the conditional probability of obtaining the outcome $a$ given the input $x$. We denote normalized states by $\hat \rho$. 
   
   \item  Based on the input $x$ and the outcome $a$, Bob proceeds as follows to extract work:
   
   \begin{enumerate}
       \item Quench the initial Hamiltonian \( H = 0 \) to \( H_{a|x} \) [see Eq.~\eqref{MUB_Hamiltonian_main} below]. The system remains isolated from the heat bath, so its state $\rho_{a|x}$ remains unchanged. The difference between the initial and final average energies, \( -\Tr\left(H_{a|x} \hat{\rho}_{a|x} \right) \), can be extracted as work \cite{Aberg2013,Entropy_sagawa}. Next, we connect the heat bath to thermalize the state \( \rho_{a|x} \) with respect to the Hamiltonian \( H_{a|x} \). This results in the following thermal state \( \hat{\gamma}_{a|x} \) for Bob's system:
      \begin{equation}
        \hat{\gamma}_{a|x} = \frac{e^{-\beta H_{a|x}}}{\Tr\left(e^{-\beta H_{a|x}}\right)},\quad\text{with} \quad \beta = \left(k_B T\right)^{-1},
      \end{equation}
where $k_B$ is the Boltzmann constant. No work extraction occurs during such a thermalization process \cite{Aberg2013,Horodecki2013}.

       \item Quench the Hamiltonian from \( H_{a|x} \) back to the initial Hamiltonian \( H = 0 \), without interacting with the heat bath. The work invested is given by the difference between the initial and final average energy during this process, \( \Tr(H_{a|x} \hat\gamma_{a|x}) \). The Hamiltonian returns to its initial value, so the process is cyclic~\cite{Ergotropy}.
   \end{enumerate}
   \end{enumerate}
   
Therefore, when Bob samples the input $x$ and Alice obtains the outcome $a$, the net extractable work is 
   \begin{equation}\label{diff_of_FE}
      W\left(\hat{\rho}_{a|x}, H_{a|x}\right) \coloneqq -\Tr\left(H_{a|x} \hat{\rho}_{a|x} \right) + \Tr(H_{a|x} \hat\gamma_{a|x}).
   \end{equation} 
The extracted work averaged over all possible inputs $x$ and outcomes $a$ is  
\begin{eqnarray}
    &&W\left(\{\hat\rho_{a|x}\},\{H_{a|x}\}\right) \coloneqq \frac{1}{|\mathcal{X}|} \sum_{a, x} p(a|x) W(\hat{\rho}_{a|x}, H_{a|x})\nonumber\\ &=&  \frac{1}{|\mathcal{X}|}\sum_{a, x} p(a|x) \left[-\Tr\left(H_{a|x} \hat{\rho}_{a|x} \right) + \Tr(H_{a|x} \hat\gamma_{a|x})\right],
\end{eqnarray}
where the prefactor $1/|\mathcal{X}|$ arises due to the uniform sampling of the input from the set $\mathcal{X}$. Here $|\mathcal{X}|$ denotes the number of elements in the set $\mathcal{X}$. Our work extraction protocol in step 3 resembles the protocol used in the four-stroke microscopic Otto engine to extract work~\cite{cangemni_Levy_engines,Equivelence_Uzdin_Kosloff,Blickle2011,Singer}.

Drawing insights from the correspondence between unsteerability and joint measurability~\cite{Uolaonetoone,Ku2022}, we focus on the work extraction task introduced above. We assume that Bob can quench his system to Hamiltonians
\begin{equation}\label{MUB_Hamiltonian_main}
    H_{a|x}=-\omega \ketbra{\phi^a_x},
\end{equation}
where $\ket{\phi^a_x}$ is the $a^{\text{th}}$ element of mutually unbiased bases (MUB) indexed by $x$. $d$ is the dimension of the quantum system $s$, and $a\in\{0,\ldots,d-1\}$ and $x\in\{0,\ldots,n-1\}$ (thus, $|\mathcal{X}|=n$ in this scenario).
Recall that a MUB is defined via the following relation:
\begin{equation}
    \left|\braket{\phi_{x}^a}{\phi_{y}^b}\right|=\begin{cases}
			\delta_{a,b}, & \text{if $x=y$ }\\
            \frac{1}{\sqrt{d}}, & \text{otherwise}
		 \end{cases}
\end{equation}
Hamiltonians similar to Eq.~\eqref{MUB_Hamiltonian_main} are often considered in analog quantum search algorithms \cite{Farhi1998,Roland_Cerf,Allahverdyan2022}. 

Figure~\ref{fig:Steering_task} illustrates the work extraction protocol.
Next, we determine the maximum amount of work that can be extracted when the correlations in Alice and Bob's underlying quantum state are unsteerable. (We summarize the definition of steerable and unsteerable correlations in  Sec.~I of the Supplementary Material.) 

\subsection{Extractable work for unsteerable correlations and a steerability witness}

When correlations between the quantum states of Alice's and Bob's systems are unsteerable, the post-measurement state $\hat{\rho}_{a|x}$ can be expressed as~\cite{Wiseman_steering}
\begin{equation}\label{LHS_decomposition_main}
    \hat{\rho}_{a|x} = \int d\lambda \; p(\lambda|a,x) \; \hat\rho_{\lambda};  \quad\forall \; a,\; x.
\end{equation}
The decomposition in Eq. \eqref{LHS_decomposition_main} is referred as local hidden state (LHS) decomposition. Combining Eqs.~\eqref{assemblage} and~\eqref{LHS_decomposition_main} imply
\begin{equation}\label{LHS_decomposition_assemblage_main}
    \rho_{a|x}=  \int d\lambda\; p(\lambda)\; p(a|x,\lambda)\;\hat\rho_{\lambda}.\quad  \forall \; a,\;x,
\end{equation}
often referred to as an unsteerable assemblage~\cite{Wiseman_steering, SteeringRevModPhys}.  We denote the set of all unsteerable assemblages by $\mathcal{L}$  . A brief description of the unsteerability of correlations is provided in Sec.~I of   of the Supplementary Material.
The following theorem upper bounds the amount of work that can be extracted in the unsteerable scenario:
\begin{thm}\label{LHS_theorem_upper_bound_work_extraction}
    Consider the work extraction task where Bob can quench to the Hamiltonians \( H_{a|x} \) in Eq. \eqref{MUB_Hamiltonian_main}, and the state \( \hat{\rho}_{a|x} \) admits the LHS decomposition described in Eq.~\eqref{LHS_decomposition_main}. Then, the average amount of extracted work, denoted by $W_{LHS}$, satisfies 
    \begin{equation}\label{WLHS_Bound}
        W_\mathrm{LHS} \leq  \frac{\omega}{d}\left(1+\frac{d-1}{\sqrt{n}}\right)-\omega\left(\frac{e^{\beta\omega}}{e^{\beta\omega}+d-1}\right) \eqqcolon W_{\mathrm{classical}}.
    \end{equation}
\end{thm}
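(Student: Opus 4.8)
The plan is to split the average work $W$ in Eq.~\eqref{Avg_amount_of_extracted_work} into its two contributions and treat them separately. Writing $H_{a|x}=-\omega\ketbra{\phi_x^a}$, the first (extraction) term becomes $\tfrac{\omega}{n}\sum_{a,x}p(a|x)\langle\phi_x^a|\hat\rho_{a|x}|\phi_x^a\rangle=\tfrac{\omega}{n}\sum_{a,x}\langle\phi_x^a|\rho_{a|x}|\phi_x^a\rangle$, using $\rho_{a|x}=p(a|x)\hat\rho_{a|x}$. For the second (thermal) term I would first note that it is state-independent at the operator level: since $H_{a|x}$ is $-\omega$ times a rank-one projector, $e^{-\beta H_{a|x}}$ has eigenvalue $e^{\beta\omega}$ on $|\phi_x^a\rangle$ and $1$ on its orthogonal complement, so $\Tr(H_{a|x}\hat\gamma_{a|x})=-\omega\,e^{\beta\omega}/(e^{\beta\omega}+d-1)$ independently of $a$ and $x$. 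Because $\sum_a p(a|x)=1$ for every $x$, averaging this constant reproduces exactly the second term of $W_{\mathrm{classical}}$. The theorem then reduces to the single inequality $\tfrac1n\sum_{a,x}\langle\phi_x^a|\rho_{a|x}|\phi_x^a\rangle\le\tfrac1d\big(1+\tfrac{d-1}{\sqrt n}\big)$.

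Next I would insert the unsteerable assemblage $\rho_{a|x}=\int d\lambda\,p(\lambda)\,p(a|x,\lambda)\,\hat\rho_\lambda$, obtaining $\tfrac1n\int d\lambda\,p(\lambda)\sum_x\sum_a p(a|x,\lambda)\langle\phi_x^a|\hat\rho_\lambda|\phi_x^a\rangle$. For fixed $\lambda$ and $x$ the sum over $a$ is a convex combination of the numbers $\langle\phi_x^a|\hat\rho_\lambda|\phi_x^a\rangle$ and is thus bounded by $\max_a\langle\phi_x^a|\hat\rho_\lambda|\phi_x^a\rangle$ (equivalently, stochastic response functions are mixtures of deterministic ones $a=f(x)$). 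Since $\max_a\langle\phi_x^a|\cdot|\phi_x^a\rangle$ is convex in the state, the supremum over $\lambda$ is attained on a pure state $|\psi\rangle$, leaving me to bound $\max_{|\psi\rangle}\max_f\sum_x|\langle\phi_x^{f(x)}|\psi\rangle|^2=\max_f\lambda_{\max}(A_f)$, where $A_f\coloneqq\sum_{x=0}^{n-1}\ketbra{\phi_x^{f(x)}}$ is a sum of $n$ MUB projectors, one per basis.

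The heart of the proof is a sharp operator bound on $\lambda_{\max}(A_f)$. I would write $A_f=\tfrac{n}{d}\iden+B$ with $B$ traceless, and compute $\Tr(A_f^2)=\sum_{x,x'}|\langle\phi_x^{f(x)}|\phi_{x'}^{f(x')}\rangle|^2=n+\tfrac{n(n-1)}{d}$ directly from the MUB overlap relations ($1$ on the diagonal, $1/d$ off-diagonal). This yields $\Tr(B^2)=\Tr(A_f^2)-\tfrac{n^2}{d}=n\,\tfrac{d-1}{d}$, independent of $f$. The key step---where a naive estimate $\lambda_{\max}(B)\le\|B\|_{\mathrm{HS}}$ is too weak---is to exploit tracelessness: ordering the eigenvalues $\mu=\lambda_1\ge\cdots\ge\lambda_d$ of $B$, Cauchy--Schwarz on the remaining $d-1$ eigenvalues gives $\mu^2=\big(\sum_{i\ge2}\lambda_i\big)^2\le(d-1)\sum_{i\ge2}\lambda_i^2=(d-1)\big(\Tr(B^2)-\mu^2\big)$, hence $\lambda_{\max}(B)\le\sqrt{\tfrac{d-1}{d}\Tr(B^2)}=\tfrac{d-1}{d}\sqrt n$. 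Therefore $\lambda_{\max}(A_f)\le\tfrac{n}{d}+\tfrac{d-1}{d}\sqrt n$ for every $f$.

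Assembling the pieces, $\tfrac1n\sum_{a,x}\langle\phi_x^a|\rho_{a|x}|\phi_x^a\rangle\le\tfrac1n\big(\tfrac nd+\tfrac{d-1}{d}\sqrt n\big)=\tfrac1d\big(1+\tfrac{d-1}{\sqrt n}\big)$, which combined with the exact thermal term reproduces $W_{\mathrm{classical}}$. I expect the main obstacle to be precisely this eigenvalue estimate: extracting the factor $\tfrac{d-1}{d}$ (rather than the weaker $\sqrt{(d-1)/d}$ coming from $\|B\|_{\mathrm{HS}}$) is what makes the bound tight enough to match the claimed $W_{\mathrm{classical}}$ and, downstream, to yield the $\mathcal{O}(\sqrt d)$ advantage; the tracelessness-plus-Cauchy--Schwarz argument is the crux.
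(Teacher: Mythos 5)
Your proof is correct, and its overall skeleton matches the paper's: both split $W$ into the extraction and thermal contributions, evaluate the thermal term exactly as the constant $-\omega e^{\beta\omega}/(e^{\beta\omega}+d-1)$, insert the LHS decomposition, and bound the convex combination over $a$ by a maximum (equivalently, by deterministic response functions $a=f(x)$). The one genuine difference is the final step: the paper simply cites Proposition 1 of Ref.~\cite{Rastegin2015} for the bound $\tfrac1n\sum_x\Tr(\ketbra{\phi_x^{a}}\hat\rho)\le\tfrac1d\bigl(1+\tfrac{d-1}{\sqrt n}\bigr)$, whereas you rederive it from scratch via the operator $A_f=\sum_x\ketbra{\phi_x^{f(x)}}$, computing $\Tr(A_f^2)=n+n(n-1)/d$ from the MUB overlaps and then sharpening the naive Hilbert--Schmidt estimate on the traceless part $B=A_f-\tfrac nd\iden$ by the Cauchy--Schwarz/tracelessness trick $\mu^2\le(d-1)(\Tr(B^2)-\mu^2)$; all of these computations check out and yield $\lambda_{\max}(A_f)\le\tfrac nd+\tfrac{d-1}{d}\sqrt n$, exactly reproducing the cited bound. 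Your self-contained derivation has a small additional virtue: it is explicitly stated for an arbitrary choice function $f(x)$, which is the form actually needed in the application (the maximizing index $a_{\mathrm{max}}$ depends on $x$), whereas the paper's restatement of the lemma fixes a single $a$ across all bases and is applied with an $x$-dependent index without comment. The cost is length; the citation-based route is shorter but opaque about why the $\mathcal{O}(1/\sqrt d)$ scaling, and hence the downstream $\mathcal{O}(\sqrt d)$ advantage, emerges.
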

The second term on the right-hand side of the inequality in Eq. \eqref{WLHS_Bound} is the average energy of the Gibbs state with respect to the Hamiltonian in Eq.~\eqref{MUB_Hamiltonian_main}. We prove Theorem~\ref{LHS_theorem_upper_bound_work_extraction} in Appendix~\ref{Appendix_proof}. In this paper, we consider the unsteerability of correlations as a signature of classicality, while identifying the steerability of correlations as a signature of nonclassicality. Therefore, we denote the upper bound on the amount of work extracted in the unsteerable scenario as \( W_{\text{classical}} \). The inequality in Eq.~\eqref{WLHS_Bound} can serve as a witness of steerability in terms of the extracted work which we state formally in the following corollary: 

\begin{cor}
If the work \( W \) extracted from a protocol exceeds \( W_{\mathrm{classical}} \), the underlying quantum state \(\hat{\rho}_{AB}\) shared between Alice and Bob possesses steerable correlations.
\end{cor}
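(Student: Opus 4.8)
The plan is to obtain the corollary as the contrapositive of Theorem~\ref{LHS_theorem_upper_bound_work_extraction}, so essentially no new work is needed beyond a careful statement of the logical equivalence between the two formulations. Theorem~\ref{LHS_theorem_upper_bound_work_extraction} asserts that whenever the conditional states admit the LHS decomposition of Eq.~\eqref{LHS_decomposition_main}, the averaged extracted work obeys $W_{\mathrm{LHS}} \le W_{\mathrm{classical}}$; the corollary is exactly the converse reading of this implication.

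First, I would fix the definitions drawn from Appendix~\ref{Unsteerability_appendix}: the shared state $\hat{\rho}_{AB}$ is \emph{unsteerable} (with respect to Alice's measurements) precisely when every conditional state $\hat{\rho}_{a|x}$ produced in the protocol admits a common local hidden state ensemble $\{p(\lambda),\hat{\rho}_\lambda\}$ as in Eq.~\eqref{LHS_decomposition_main}, and it is \emph{steerable} if and only if no such ensemble exists. With this dichotomy in place, I would argue by contradiction: suppose the state from which the work $W$ was extracted were unsteerable. Then each $\hat{\rho}_{a|x}$ satisfies Eq.~\eqref{LHS_decomposition_main}, so the hypotheses of Theorem~\ref{LHS_theorem_upper_bound_work_extraction} hold for the same protocol (the quench to the MUB Hamiltonians of Eq.~\eqref{MUB_Hamiltonian_main}) that produced $W$. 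The theorem then forces $W = W_{\mathrm{LHS}} \le W_{\mathrm{classical}}$, contradicting the hypothesis $W > W_{\mathrm{classical}}$. Hence the state cannot be unsteerable and must therefore be steerable, which is the claim.

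There is no genuine obstacle here; the only point demanding attention---rather than difficulty---is to make sure that the work $W$ in the corollary is the output of exactly the protocol bounded in Theorem~\ref{LHS_theorem_upper_bound_work_extraction} (identical MUB Hamiltonians, identical uniform averaging over $x$ and over the outcomes $a$), so that the theorem's bound applies verbatim to the observed value. Once that identification is secured, the implication is immediate, and the corollary legitimately recasts the inequality of Eq.~\eqref{WLHS_Bound} as a steerability witness expressed purely in terms of extractable work.
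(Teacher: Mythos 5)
Your proposal is correct and matches the paper's (implicit) reasoning exactly: the corollary is stated as an immediate contrapositive of Theorem~\ref{LHS_theorem_upper_bound_work_extraction}, with unsteerability defined via the LHS decomposition of Eq.~\eqref{LHS_decomposition_main} so that $W > W_{\mathrm{classical}}$ rules it out. Your added care about applying the theorem to the identical protocol (same MUB Hamiltonians and uniform averaging) is the right point to flag, and nothing further is needed.
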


If the dimension $d$ is an integer power of a prime number, there exist $d+1$ MUBs~\cite{Ivonovic_1981,WOOTTERS1989363,Bandyopadhyay2002}. Thus, we can take $n=d+1$. Then, Eq.~\eqref{WLHS_Bound} implies
\begin{equation}\label{W_classic_main}
    W_\text{classical}= \frac{\omega}{d} \left(1+\frac{d-1}{\sqrt{d+1}}\right)-\omega\left(\frac{e^{\beta\omega}}{e^{\beta\omega}+d-1}\right).
\end{equation}
Thus, the extracted work can increase at most as $\mathcal{O}\left(1/\sqrt{d}\right)$ when correlations are unsteerable. Next, we focus on the quantum scenario, when the combined state \(\hat{\rho}_{AB}\) contains steerable correlations.

\subsection{Extractable work in the quantum scenario}
In this section, we explore the average amount of work as defined in Eq. \eqref{Avg_amount_of_extracted_work}, in the quantum scenario, where \(\hat{\rho}_{a|x}\) does not necessarily admit an LHS decomposition as in Eq. \eqref{LHS_decomposition_main}.
\begin{thm}\label{Quantum_theorem_upper_bound_work_extraction}
    Consider the work extraction task in which Bob can quench to the Hamiltonians \( H_{a|x} \) in Eq. \eqref{MUB_Hamiltonian_main}. Then, the 
     maximum achievable average extracted work is given by   
\begin{equation}\label{Wquantum_Bound}  
    W_{\mathrm{quantum}} \coloneqq \omega - \omega \left(\frac{e^{\beta\omega}}{e^{\beta\omega} + d - 1}\right).  
\end{equation}
\end{thm}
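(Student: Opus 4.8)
The plan is to collapse the double average in Eq.~\eqref{Avg_amount_of_extracted_work} into a single optimization over the shared state $\hat\rho_{AB}$ and Alice's POVMs, and then both bound and saturate that optimization. First I would evaluate the two energetic terms in Eq.~\eqref{diff_of_FE} for the rank-one MUB Hamiltonian $H_{a|x}=-\omega\ketbra{\phi^a_x}$. This Hamiltonian has a single eigenvalue $-\omega$ on $\ket{\phi^a_x}$ and a $(d-1)$-fold degenerate eigenvalue $0$ on its orthogonal complement, so $e^{-\beta H_{a|x}}=e^{\beta\omega}\ketbra{\phi^a_x}+(\iden-\ketbra{\phi^a_x})$ and $\Tr(e^{-\beta H_{a|x}})=e^{\beta\omega}+d-1$. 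Hence $\Tr(H_{a|x}\hat\gamma_{a|x})=-\omega\,e^{\beta\omega}/(e^{\beta\omega}+d-1)$, which is \emph{independent} of $a$ and $x$, while $-\Tr(H_{a|x}\hat\rho_{a|x})=\omega\bra{\phi^a_x}\hat\rho_{a|x}\ket{\phi^a_x}$.

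Next I would substitute these into Eq.~\eqref{Avg_amount_of_extracted_work}. Using $\sum_a p(a|x)=1$ together with $|\mathcal{X}|=n$, the constant thermal contribution averages to exactly $-\omega\,e^{\beta\omega}/(e^{\beta\omega}+d-1)$. Writing $\rho_{a|x}=p(a|x)\hat\rho_{a|x}$, the remaining piece is $\tfrac{\omega}{n}\sum_{a,x}\bra{\phi^a_x}\rho_{a|x}\ket{\phi^a_x}$, so that maximizing $W$ is equivalent to maximizing $S\coloneqq\sum_{a,x}\bra{\phi^a_x}\rho_{a|x}\ket{\phi^a_x}$ over all states and measurements. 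The upper bound then follows from a one-line estimate: for each pair $(a,x)$, $\bra{\phi^a_x}\rho_{a|x}\ket{\phi^a_x}=p(a|x)\bra{\phi^a_x}\hat\rho_{a|x}\ket{\phi^a_x}\le p(a|x)$ because $\ketbra{\phi^a_x}$ is a projector and $\hat\rho_{a|x}$ is normalized. Summing over $a$ first and then over the $n$ values of $x$ gives $S\le n$, hence $W\le\omega-\omega\,e^{\beta\omega}/(e^{\beta\omega}+d-1)=W_{\mathrm{quantum}}$. I would note that, unlike the LHS-constrained bound of Theorem~\ref{LHS_theorem_upper_bound_work_extraction}, this estimate uses \emph{no} property of the MUBs beyond each $H_{a|x}$ being a rank-one projector Hamiltonian.

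The step that genuinely requires the steering structure---and the main thing to get right---is achievability. Saturating $S=n$ forces $\hat\rho_{a|x}=\ketbra{\phi^a_x}$ for every outcome with $p(a|x)>0$; that is, Alice must be able to steer Bob's conditional state onto each MUB vector on demand. I would exhibit such a strategy by taking $\hat\rho_{AB}$ to be the maximally entangled state and letting Alice measure, for each input $x$, in the complex-conjugate basis $\{\ketbra{(\phi^a_x)^*}\}_a$, which is again orthonormal and hence a valid projective measurement. A short calculation shows this steers Bob to exactly $\ket{\phi^a_x}$ with $p(a|x)=1/d$, so every term of $S$ attains its maximum and $S=n$. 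This is precisely where steerability enters as the enabling resource.

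Finally I would remark that $W_{\mathrm{quantum}}\to\omega$ as $d\to\infty$, so the quantum value stays order-one in the dimension; juxtaposed with the $\mathcal{O}(1/\sqrt{d})$ decay of $W_{\mathrm{classical}}$ from Eq.~\eqref{W_classic_main}, this is what yields the claimed $\mathcal{O}(\sqrt{d})$ advantage. I expect the bounding step to be routine; the care lies in verifying that the maximally entangled state together with conjugate-basis measurements genuinely saturates the bound, which is the operational heart of the result.
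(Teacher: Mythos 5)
Your proposal is correct and follows essentially the same route as the paper: the upper bound via $\bra{\phi^a_x}\hat\rho_{a|x}\ket{\phi^a_x}\le 1$ is just a rephrasing of the paper's bound $-\Tr(H_{a|x}\hat\rho_{a|x})\le-\sigma_{\min}(H_{a|x})=\omega$, and your achievability construction (maximally entangled state with Alice measuring in the conjugate MUB basis, steering Bob onto $\ketbra{\phi^a_x}$ with $p(a|x)=1/d$) is exactly the protocol the paper uses, as derived in Appendix~\ref{Derivation_eq}.
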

\begin{proof}
For any state, the average extractable work in Eq.~\eqref{Avg_amount_of_extracted_work} is upper bounded by 
    \begin{equation}
    \label{upper_bound_main_work_main}
        W(\{\hat\rho_{a|x}\},\{H_{a|x}\}) \leq  \frac{1}{|\mathcal{X}|}\sum_{a, x} p(a|x) \left[-\sigma_{\text{min}}\left(H_{a|x}\right) + \Tr(H_{a|x} \hat\gamma_{a|x})\right],
    \end{equation}
where $\sigma_{\text{min}}(\cdot)$ denotes the smallest eigenvalue of $(\cdot)$. For $H_{a|x}$ in Eq.~\eqref{MUB_Hamiltonian_main}, $\sigma_{\text{min}}\left(H_{a|x}\right)=-\omega$, so  Eq.~\eqref{upper_bound_main_work_main} reduces to
\begin{eqnarray}\label{generic_upper_bound_optimality}
    W(\{\hat\rho_{a|x}\},\{H_{a|x}\})\leq \omega-\omega\left(\frac{e^{\beta\omega}}{e^{\beta\omega}+d-1}\right).
\end{eqnarray}

Next, we show a protocol that attains such a bound. 
Assume the initial state $\hat{\rho}_{AB}$ of the combined system of Alice and Bob is a maximally entangled state of the form
\begin{equation}
    \hat{\rho}_{AB}=\frac{1}{d}\sum_{i=0}^{d-1}\sum_{j=0}^{d-1}\ketbra{i}{j}\otimes\ketbra{i}{j}.
\end{equation}
(The correlation present in the maximally entangled state is known to be steerable, as the marginal state \(\hat{\rho}_{a|x}\) cannot always be expressed in the form given by Eq.~\eqref{LHS_decomposition_main} \cite{SteeringRevModPhys}.) Suppose that Bob samples the input $x$ randomly from the set $\mathcal{X}$ in a particular run. After Bob communicates the input $x$ to Alice, she performs a projective measurement on the orthonormal basis
\begin{equation}\label{MUB_measurement_assemblage}
    \left\{\ketbra{\phi^{*c}_{x}}\right\}_{c=0}^{d-1}, \qquad \textnormal{}
    \ket{\phi^{*c}_{x}}=\sum_{i=0}^{d-1}\braket{\phi^c_x}{i}\ket{i},
\end{equation}
obtains an outcome $a$ from the set $\{0,\ldots,d-1\}$, and communicates the outcome to Bob. 
(The set of projective measurements indexed by $x$ forms a MUB). 
This results in the following state on Bob's side:
\begin{eqnarray}\label{MUB_assemblage_main}
    \hat\rho_{a|x}=\ketbra{\phi^a_x}.
\end{eqnarray}
We derive Eq.~\eqref{MUB_assemblage_main} in Appendix~\ref{Derivation_eq}. Substituting $\hat\rho_{a|x}$ from Eq. \eqref{MUB_assemblage_main} and $H_{a|x}=-\omega \ketbra{\phi^a_x}$ in the extractable work in Eq. \eqref{diff_of_FE}, we obtain that the extractable work $W_{\text{quantum}}$ in the quantum scenario is
\begin{equation}\label{W_quantum_main}
    W_{\text{quantum}}:=\omega-\omega\left(\frac{e^{\beta\omega}}{e^{\beta\omega}+d-1}\right).
\end{equation}
\end{proof}

\subsection{Quantum thermodynamic advantage in work extraction}

Using Theorems~\ref{LHS_theorem_upper_bound_work_extraction} and~\ref{Quantum_theorem_upper_bound_work_extraction}, the ratio between extractable works in the quantum and unsteerable scenarios satisfies  
\begin{equation}\label{definition}
    \frac{W_{\text{quantum}}}{W_{\text{LHS}}}\geq \frac{W_{\text{quantum}}}{W_{\text{classical}}} = \frac{\omega-\omega\left(\frac{e^{\beta\omega}}{e^{\beta\omega}+d-1}\right)}{\frac{\omega}{d} \left(1+\frac{d-1}{\sqrt{n}}\right)-\omega\left(\frac{e^{\beta\omega}}{e^{\beta\omega}+d-1}\right)}:=\xi.
\end{equation}
The ratio $\xi \coloneqq W_{\text{quantum}}/W_{\text{classical}}$ quantifies the quantum thermodynamic advantage in the work extraction task. A quantum advantage exists for $\xi > 1$, when 
$W_{\text{quantum}} > W_{\text{classical}}\geq W_{\text{LHS}}$. From Eq.~\eqref{definition}, a quantum thermodynamic advantage occurs when
\begin{equation}\label{omegadsqrtn}
    \omega>\frac{\omega}{d}\left(1+\frac{d-1}{\sqrt{n}}\right)\;\;\Longleftrightarrow \frac{d\sqrt{n}}{\sqrt{n}+d-1}>1.
\end{equation}
The above inequality is satisfied for any $d$ if $n \geq 2$,
so two mutually unbiased bases are sufficient for quantum thermodynamic advantage when constructing $H_{a|x}$ as in Eq.~\eqref{MUB_Hamiltonian_main}. Note that at least two mutually unbiased bases exist in any dimension, as one can construct them using any orthonormal basis and its Fourier transform. 
This implies that our protocol exhibits a quantum thermodynamic advantage for any dimension. 

If the dimension \( d \) of the underlying system is an integer power of a prime number, there exist $d+1$ MUBs, so one can take $n=d+1$~\cite{Ivonovic_1981,WOOTTERS1989363,Bandyopadhyay2002}. Then, Eq.~\eqref{definition} implies that $\xi \sim \mathcal{O}(\sqrt{d})$, with $W_{\text{quantum}}\sim \mathcal{O}(1)$ and $W_{\text{classical}}\sim \mathcal{O}(1/\sqrt{d})$, respectively, from Eqs.~\eqref{W_classic_main} and~\eqref{W_quantum_main}.
That is, the quantum scenario offers an advantage in work extraction over the unsteerable scenario that grows with the dimension of the underlying system (sometimes referred to as an unbounded quantum advantage in the information theory literature~\cite{Skrzypczyk,Garner_2017,Heinsaari,SahaPRR, Rout}).

  The scaling advantage described in the previous paragraph depends on the specific set of Hamiltonians used for the work extraction task. In Sec.~II of the Supplementary Material, we consider a different set of Hamiltonians composed of tensor-product Pauli matrices acting on an $n$-qubit system~\cite{SM}. In this setting, we demonstrate a work extraction advantage that scales as $W_{\text{quantum}}/W_{\text{classical}} \geq \mathcal{O}(\sqrt{n})$, which corresponds to a scaling of $\mathcal{O}(\sqrt{\log_2 d})$ with the underlying system dimension $d = 2^n$. It would be interesting to characterize classes of Hamiltonians that can yield scaling advantages in work extraction from quantum correlations.

\section{Discussion}

We showed that the steerability of quantum correlations can be a valuable resource in thermodynamic scenarios. To illustrate this, we designed two work extraction tasks in a bipartite setting, where steerable correlations (specifically, in maximally entangled states) between two parties are exploited to demonstrate a quantum thermodynamic advantage in work extraction.   
 In the main work extraction task detailed in this letter, the ratio of extractable work enabled by steerable quantum correlations—arising from the maximally entangled state—to that achievable from any unsteerable classical correlations scales as $\mathcal{O}(\sqrt{d})$. 

 Our research paves the way to several promising directions in quantum thermodynamics. For example, work extraction plays a central role in cooling protocols by enabling heat transfer from a colder to a hotter reservoir, and it is also fundamental to the operation of quantum batteries, where work is used to charge a quantized energy system. It would be interesting to investigate whether our approach to exploit quantum correlations to enhance work extraction can be leveraged to design efficient protocols for cooling or charging. 

Another natural and important direction for future research is to pursue experimental realizations of the quantum thermodynamic advantage predicted by our framework, using platforms such as trapped ions~\cite{Singer} or superconducting qubits~\cite{Aamir2025}. We emphasize that our work extraction protocol relies on quenching and thermalization—two experimentally feasible operations that are routinely implemented in microscopic thermal machines~\cite{cangemni_Levy_engines,PRXMH,Blickle2011,Singer}. Given this, we expect that the predicted work extraction advantage can be demonstrated in near-term experimental setups. 

Additionally, it would be interesting to investigate whether a similar framework could be developed to extract work in closed systems without the need for a heat bath. In the closed system scenario, work extraction occurs by activating an interaction Hamiltonian for a specific duration. The interaction Hamiltonian induces unitary evolution. Thus, the extracted work corresponds to the difference between the average energies of the initial and final pure states. 

 Finally, it would be intriguing to investigate whether such a work extraction task can reveal advantages arising from other forms of quantum correlations. For example, one might ask whether the correlations present in a given entangled state offer an advantage over those found in any separable state. Moreover, it would be particularly compelling to explore potential connections between this work extraction advantage and various quantifiers of entanglement and steerability. Notably, measures such as entanglement and steerability robustness have established operational interpretations in subchannel distinguishability tasks \cite{Piani2009, Piani_subchannel}. Uncovering an analogous operational interpretation of these quantifiers in the context of work extraction would be an exciting direction for future research. 

\section{Acknowledgement}
We acknowledge Chung-Yun Hsieh for his valuable comments. This material is based upon work supported by the U.S. Department of Energy, Office of Science, Accelerated Research in Quantum Computing, Fundamental Algorithmic Research toward Quantum Utility (FAR-Qu) and Fundamental Algorithmic Research in Quantum Computing (FAR-QC). We also acknowledge support from the Beyond Moore’s Law project of the Advanced Simulation and Computing Program at LANL managed by Triad National Security, LLC, for the National Nuclear Security Administration of the U.S. DOE under contract 89233218CNA000001.

\bibliography{main}
\appendix
\onecolumngrid
\section{Revisiting the concept of unsteerability}\label{Unsteerability_appendix}
Let us briefly revisit the concept of unsteerability of quantum states. Consider a scenario where Alice and Bob share a bipartite quantum state \(\hat{\rho}_{AB}\). Alice can perform different measurements on her system, which need not necessarily be projective. For each of Alice's measurement settings \(x\) and corresponding outcomes \(a\), Bob is left with a conditional state \(\hat{\rho}_{a|x}\). Once the set of states \(\{\hat{\rho}_{a|x}\}_{a,x}\) is characterized, Bob can attempt to explain their appearance as follows: Bob assumes that his system is initially in a hidden state \( \hat\rho_{\lambda} \), determined by a hidden variable \( \lambda \) occurring with probability \( p(\lambda) \). When Alice performs a measurement and obtains a result, this provides Bob with additional information about the probabilities associated with these states. Thus, the updated state can be written as 
\begin{equation}\label{defn:unsteerability}
    \hat{\rho}_{a|x}=\frac{\rho_{a|x}}{p(a|x)} = \int d\lambda\; p(\lambda|a,x)\hat\rho_{\lambda}\quad{\Longleftrightarrow}\quad \rho_{a|x} = \int d\lambda\; p(\lambda)\; p(a|x,\lambda)\;\hat\rho_{\lambda}\quad \text{for any $a,x$}.
\end{equation}
The set of unnormalized states $\{\rho_{a|x}\}_{a,x}$ referred to as  \emph{assemblages}). An assemblage is said to be \emph{unsteerable} if they admit the decomposition in Eq.~\eqref{defn:unsteerability} for any $a,x$. The equivalence between these two expressions is easy to verify assuming \( x \) can be freely chosen and is independent of the parameter $\lambda$, meaning \( p(x, \lambda) = p(x) p(\lambda) \). We denote the set of all unsteerable assemblages as $\mathcal{L}$.

The state $\hat{\rho}_{AB}$ is then said to be unsteerable. Here, Bob does not need to assume that Alice has any direct control over his state. Instead, her measurements and results simply provide additional information about the distribution of the hidden states  $\hat\rho_{\lambda}$. 

 Quantum steering encapsulates the non-classical correlations that can be observed between the outcomes of measurements applied on half of an entangled state and the resulting post-measurement states that are left with the other
party.  Operationally, a steering test can be viewed as an entanglement test in which one of the parties performs uncharacterized measurements. An uncharacterized measurement refers to a scenario where the measurement device is treated as a black box: given a classical input \( x \), it produces a classical output \( a \) with probability \( p(a|x) \), without any knowledge of the internal implementation or the specific measurement operators involved. In contrast, testing the entanglement of an arbitrary state typically requires that both parties perform well-characterized measurements, where the measurement operators are known to the corresponding parties. On the other extreme, in a Bell nonlocality test, both parties are assumed to use uncharacterized measurement devices. The relationship between entangled, steerable, and Bell nonlocal quantum states can be summarized as a strict hierarchy:
\begin{equation}
\text{Bell nonlocal} \subsetneq \text{Steerable} \subsetneq \text{Entangled}
\end{equation}
 Thus, quantum steering is a form of quantum inseparability
that lies in between the well-known notions of Bell nonlocality and entanglement. It is worth mentioning that steering may be demonstrated using any pure entangled state, and the same is
true of Bell nonlocality~\cite{Wiseman_steering}.

\section{Proof of Theorem \ref{LHS_theorem_upper_bound_work_extraction}}\label{Appendix_proof}
Equation~\eqref{Avg_amount_of_extracted_work} in the main text provides the average amount of extracted work. It holds that  
\begin{eqnarray}
    W\left(\{\hat\rho_{a|x}\},\{H_{a|x}\}\right) 
    &=& \frac{1}{|\mathcal{X}|}\sum_{a,x}p(a|x)\Big(-\Tr(H_{a|x}\hat\rho_{a|x})+\Tr(H_{a|x}\hat\gamma_{a|x})\Big)\label{exact_avg_work_gen} \\
    &=& -\frac{1}{|\mathcal{X}|}\sum_{a,x}\Tr(H_{a|x}\rho_{a|x})+\frac{1}{|\mathcal{X}|}\sum_{a,x}p(a|x)\Tr(H_{a|x}\hat\gamma_{a|x})\nonumber\\
    &=& -\frac{1}{|\mathcal{X}|}\sum_{a,x}\int\;d\lambda p(\lambda)p(a|x,\lambda)\Tr(H_{a|x}\hat\rho_{\lambda})+ \frac{1}{|\mathcal{X}|}\sum_{a,x}p(a|x)\Tr(H_{a|x}\hat\gamma_{a|x})=W_{\text{LHS}}\label{The_important_bound}.
\end{eqnarray}
To write the second equality, we use the fact that $p(a|x)\hat\rho_{a|x}=\rho_{a|x}$, whereas to write the third equality we use the definition of unsteerability from Eq.~\eqref{LHS_decomposition_assemblage_main}. Now, we consider the set of Hamiltonians $H_{a|x}$ from Eq. \eqref{MUB_Hamiltonian_main} in the main text, i.e.,
\begin{equation}\label{MUB_Hamiltonian}
    H_{a|x} =-\omega \ketbra{\phi^a_x}, \quad\text{such that}\; x\in\{0,1,\ldots,n-1\}\quad \text{and}\;a\in\{0,\ldots,d-1\},
\end{equation}
where $\left\{\ketbra{\phi^a_x}\right\}$ forms MUBs. For $H_{a|x}$ in Eq.~\eqref{MUB_Hamiltonian}, we have
\begin{eqnarray}\label{partition_fn}
   \Tr\left(H_{a|x}\hat\gamma_{a|x}\right)=-\omega\left(\frac{ e^{\beta\omega}}{e^{\beta\omega}+d-1}\right).
\end{eqnarray}
 We can provide an upper bound on the average amount of extracted work starting from Eq. \eqref{The_important_bound} as follows:
\begin{eqnarray}
    W_{\text{LHS}}&=&  -\frac{1}{n}\sum_{x=0}^{n-1}\sum_{a=0}^{d-1}\int\;d\lambda p(\lambda)p(a|x,\lambda)\Tr(H_{a|x}\hat\rho_{\lambda})+\frac{1}{ n}\sum_{x=0}^{d}\sum_{a=0}^{d-1}p(a|x)\Tr\left(H_{a|x}\hat\gamma_{a|x}\right)\nonumber\\
    &=& \frac{\omega }{n}\sum_{x=0}^{n-1}\sum_{a=0}^{d-1}\int\;d\lambda p(\lambda)p(a|x,\lambda)\Tr(\ketbra{\phi^a_x}\hat\rho_{\lambda})-\frac{\omega}{n}\sum_{x=0}^{n-1}\sum_{a=0}^{d-1}p(a|x)\left(\frac{ e^{\beta\omega}}{e^{\beta\omega}+d-1}\right)\nonumber\\
    &=& \frac{\omega }{n}\sum_{x=0}^{n-1}\sum_{a=0}^{d-1}\int\;d\lambda p(\lambda)p(a|x,\lambda)\Tr(\ketbra{\phi^a_x}\hat\rho_{\lambda})-\omega\left(\frac{ e^{\beta\omega}}{e^{\beta\omega}+d-1}\right)\nonumber\\
    &\leq& \frac{\omega}{n}\int d\lambda \;p(\lambda)\sum_{x=0}^{n-1}\max_{a}\Tr(\ketbra{\phi^a_x}\hat\rho_{\lambda})\underbrace{\sum_{a=0}^{d-1}p(a|x,\lambda)}_{=1}\;-\omega\left(\frac{ e^{\beta\omega}}{e^{\beta\omega}+d-1}\right)\nonumber\\
    &=& \omega \int d\lambda \;p(\lambda)\left[\frac{1}{n}\sum_{x=0}^{n-1}\Tr(\ketbra{\phi^{a_{\text{max}}}_x}\hat\rho_{\lambda})\right]\;-\omega\left(\frac{ e^{\beta\omega}}{e^{\beta\omega}+d-1}\right)\nonumber\quad\\
    &\leq& \omega\underbrace{\int d\lambda \;p(\lambda)}_{=1}\frac{1}{d}\left(1+\frac{d-1}{\sqrt{n}}\right)-\omega\left(\frac{ e^{\beta\omega}}{e^{\beta\omega}+d-1}\right)\nonumber\\
    &=& \frac{\omega}{d}\left(1+\frac{d-1}{\sqrt{n}}\right)-\omega\left(\frac{e^{\beta\omega}}{e^{\beta\omega}+d-1}\right)=W_{\text{classical}}\label{appendix_steering_bd},
\end{eqnarray}
where we define $\Tr(\ketbra{\phi^{a_{\text{max}}}_x}\hat\rho_{\lambda}) :=\max_{a}\Tr(\ketbra{\phi^a_x}\hat\rho_{\lambda})$. To write the second equality, we substitute $H_{a|x}$ from Eq. \eqref{MUB_Hamiltonian} and substitute the expression of $\Tr(H_{a|x}\hat\gamma_{a|x})$ from Eq. \eqref{partition_fn}. To write the final inequality we use a result from Ref.~\cite{Rastegin2015}, which we include next for completeness:
\begin{lem}[Proposition 1 of Ref. \cite{Rastegin2015}]
    Let $\left\{\ketbra{\phi_x^a}\right\}$ denotes the set of MUBs  in a $d$-dimensional Hilbert space indexed by $x$ such that $x\in\{0,\ldots,n-1\}$. For an arbitrary density matrix $\rho$, we have the following inequality:
\begin{equation}
    \frac{1}{n}\sum_{x=0}^{n-1} \Tr\left(\ketbra{\phi^a_x}\hat\rho\right)\leq \frac{1}{d}\left(1+\frac{d-1}{\sqrt{n}}\right)\quad\forall\; a\in\{1,\ldots,d\}.
\end{equation}
\end{lem}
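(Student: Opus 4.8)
The plan is to recast the left-hand side as the expectation value of a single positive operator and then bound that expectation by its largest eigenvalue. For the fixed outcome index $a$, define the Hermitian operator
\[
A \coloneqq \sum_{x=0}^{n-1}\ketbra{\phi^a_x},
\]
so that $\frac{1}{n}\sum_{x}\Tr(\ketbra{\phi^a_x}\hat\rho) = \frac{1}{n}\Tr(A\hat\rho)$. Since $\hat\rho$ is a density matrix, $\Tr(A\hat\rho) = \sum_k p_k\bra{\psi_k}A\ket{\psi_k} \leq \sigma_{\max}(A)$, where $\{p_k,\ket{\psi_k}\}$ is the eigendecomposition of $\hat\rho$ and $\sigma_{\max}(A)$ is the largest eigenvalue of $A$. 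Thus it suffices to establish $\sigma_{\max}(A) \leq \frac{n}{d}\bigl(1+\tfrac{d-1}{\sqrt{n}}\bigr)$.

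The key idea is to control $\sigma_{\max}(A)$ using only the first two spectral moments of $A$, both of which follow directly from the MUB overlap relation. First, $\Tr(A)=n$, because each $\ketbra{\phi^a_x}$ is a rank-one projector. Second, using $|\braket{\phi^a_x}{\phi^a_y}|^2 = 1/d$ for $x\neq y$ (and $=1$ for $x=y$), I would compute
\[
\Tr(A^2) = \sum_{x,y}\bigl|\braket{\phi^a_x}{\phi^a_y}\bigr|^2 = n + \frac{n(n-1)}{d}.
\]

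With these two traces in hand, I would shift to the traceless part by writing $A = \frac{n}{d}\mathbb{I} + B$, where $\Tr(B)=0$ and $\Tr(B^2) = \Tr(A^2)-\frac{n^2}{d} = \frac{n(d-1)}{d}$. For any traceless Hermitian operator on a $d$-dimensional space, the largest eigenvalue $\mu_1$ satisfies $\mu_1 \leq \sqrt{\tfrac{d-1}{d}\,\Tr(B^2)}$: fixing $\mu_1$ and minimizing the remaining contribution to $\Tr(B^2)$ forces the other $d-1$ eigenvalues to equal their common mean $-\mu_1/(d-1)$ (by Cauchy--Schwarz applied to the constraint that they sum to $-\mu_1$), which gives $\Tr(B^2)\geq \mu_1^2\,\tfrac{d}{d-1}$. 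Substituting $\Tr(B^2)=n(d-1)/d$ yields $\mu_1 \leq (d-1)\sqrt{n}/d$, hence $\sigma_{\max}(A)=\frac{n}{d}+\mu_1 \leq \frac{n}{d}+\frac{(d-1)\sqrt{n}}{d}$, which is exactly the claimed bound after dividing through by $n$.

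I expect the only delicate point to be the spectral inequality $\mu_1 \leq \sqrt{\tfrac{d-1}{d}\Tr(B^2)}$ (a Wolkowicz--Styan-type bound): it is elementary but relies on correctly identifying the extremal eigenvalue configuration and on the fact that $B$ acts on exactly $d$ dimensions. Everything else is a direct moment computation from the defining MUB relation. Notably, the argument never requires $n$ to attain its maximal value $d+1$, so the bound holds for any number $n$ of mutually unbiased bases, as needed in the application of Theorem~\ref{LHS_theorem_upper_bound_work_extraction}.
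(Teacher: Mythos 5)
Your proof is correct, and all the computations check out: $\Tr(A)=n$, $\Tr(A^2)=n+n(n-1)/d$ from the MUB overlap relation, $\Tr(B^2)=n(d-1)/d$ for the traceless part $B=A-\tfrac{n}{d}\I$, and the Cauchy--Schwarz step $\mu_1^2=\bigl(\sum_{i\geq 2}\mu_i\bigr)^2\leq (d-1)\bigl(\Tr(B^2)-\mu_1^2\bigr)$ gives exactly $\mu_1\leq (d-1)\sqrt{n}/d$, hence the claimed bound after adding $n/d$ and dividing by $n$. Note that the paper itself does not prove this lemma --- it is imported verbatim as Proposition~1 of Ref.~\cite{Rastegin2015} ``for completeness'' --- so you have supplied a self-contained argument where the paper only cites. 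Your route (reduce to $\sigma_{\max}$ of the averaged projector, then control it by its first two spectral moments via a Wolkowicz--Styan-type eigenvalue bound) is the standard way this MUB inequality is derived in the literature, so there is no substantive methodological divergence to flag; your closing observation that the argument works for any $n$, not only $n=d+1$, is also correct and is indeed what Theorem~\ref{LHS_theorem_upper_bound_work_extraction} needs.
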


\section{Derivation of Eq. \eqref{MUB_assemblage_main}}\label{Derivation_eq}

Recall that 
    $\hat{\rho}_{AB}=\ketbra{\psi_{AB}}$,
where
\begin{equation}
    \ket{\psi_{AB}}=\frac{1}{\sqrt{d}}\sum_{i=0}^{d-1}\ket{ii}.
\end{equation}
Suppose Bob randomly selects the input \( x \) by sampling from the set \( \mathcal{X} \). As Bob informs Alice of the input $x$, she performs a projective measurement on the basis
\begin{equation}
    \left\{\ketbra{\phi^{*c}_{x}}\right\}_{c=0}^{d-1},
\end{equation}
where 
\begin{equation}
    \ket{\phi^{*c}_{x}}=\sum_{i=0}^{d-1}\braket{i}{\phi^c_x}^{*}\ket{i}=\sum_{i=0}^{d-1}\braket{\phi^c_x}{i}\ket{i}.
\end{equation}
Alice obtains an outcome $a$. The above measurement yields the following state on Bob's side:
\begin{eqnarray}\label{MUB_assemblage}
    \hat\rho_{a|x}&=&\frac{\frac{1}{d}\Tr_{A}\left[\sum_{i=0}^{d-1}\sum_{j=0}^{d-1}\left(\ketbra{\phi^{*a}_{x}}\otimes \mathbb{I}\right)\left(\ketbra{i}{j}\otimes \ketbra{i}{j}\right)\right]}{\frac{1}{d}\Tr\left[\sum_{a=0}^{d-1}\sum_{\tilde{a}=0}^{d-1}\left(\ketbra{\phi^{*a}_{x}}\otimes \mathbb{I}\right)\left(\ketbra{i}{j}\otimes \ketbra{i}{j}\right)\right]}\nonumber\\
    &=& \left(\sum_{i=0}^{d-1}\braket{\phi_x^{*a}}{i}\ket{i}\right)\left(\sum_{j=0}^{d-1}\braket{j}{\phi_x^{*a}}\bra{j}\right)=\left(\sum_{i=0}^{d-1}\braket{i}{\phi^a_x}\ket{i}\right)\left(\sum_{j=0}^{d-1}\braket{j}{\phi^a_x}^{*}\bra{j}\right)
    =\ketbra{\phi^a_x}.
\end{eqnarray}
This proves Eq.~\eqref{MUB_assemblage_main} in the main text.

\section{Work extraction using Hamiltonians composed of Pauli string}\label{Pauli_strings_Hamiltonian}

In this section, we aim to design a work extraction task that demonstrates quantum advantage with Hamiltonian constructed from Pauli observables. We consider a scenario in which Bob can quench his system to Hamiltonians of the form  
\begin{equation}\label{Pauli_Hamiltonian_main}
    H_{a|x} := -\frac{\omega}{2}\left(\mathbb{I} + a A_x\right), \quad \text{with } a \in \{+1, -1\},
\end{equation}
where each \( A_x \), for \( x \in \{1, \cdots, n\} \), is a distinct, nontrivial Pauli string that are mutually anticommuting, and $\mathbb{I}$ is $2^{n}\times 2^{n}$ identity matrix. They are acting on the Hilbert space \( \mathcal{B}(\mathbb{C}^{2^n}) \).  In other words, we consider
\begin{equation}\label{Pauli_Hamiltonian_main3}
    A_x = \bigotimes_{i=1}^{n} A_x^{(i)}, \quad \text{with } A_x^{(i)} \in \{I, X, Y, Z\}\quad\text{and}\quad x\in\{1,\ldots,n\},
\end{equation}
which satisfy the following anticommutation relation:  
\begin{equation}\label{Eq:Anticommutation}
    A_x A_y + A_y A_x = 2\delta_{x,y}\, \mathbb{I}.
\end{equation}
Here, $I,X,Y,Z$ are the Pauli matrices. For example, one can take $A_x=Z^{\otimes (x-1)}\otimes X\otimes I^{\otimes (n-x)}$ i.e.,
\begin{eqnarray}
    A_1&=&X\otimes I\otimes I\otimes\cdots\otimes I,\nonumber\\
    A_2&=&Z\otimes X\otimes I\otimes\cdots\otimes I,\nonumber\\
    &&\quad\quad\vdots\quad\quad\vdots\quad\quad\nonumber\\
    A_n&=&Z\otimes Z\otimes Z\otimes\cdots\otimes X.
\end{eqnarray}
As $x\in\{1,\ldots, n\}$ which give $|\mathcal{X}|=n$. We now evaluate the maximum extractable work under the assumption that the correlations in the shared quantum state between Alice and Bob are unsteerable.
\subsection{Extractable work for unsteerable correlations}
\begin{thm}\label{LHS_theorem_upper_bound_work_extraction_2}
    Consider a work extraction task in which Bob is allowed to quench his system to the Hamiltonians \( H_{a|x} \) as defined in Eq.~\eqref{Pauli_Hamiltonian_main}, and the state \( \hat{\rho}_{a|x} \) admits a local-hidden-state (LHS)  decomposition as given in Eq. \eqref{defn:unsteerability}.
     Then, the average extractable work, denoted by \(  W_\mathrm{LHS} \), is bounded by
    \begin{eqnarray}\label{WLHS_Bound2}
         W_\mathrm{LHS} &\leq& \omega\left(\frac{1}{2\sqrt{n}}+\frac{1}{2}\right)-\frac{\omega}{1+e^{-\beta\omega}} \nonumber\\
        &=& \omega\left(\frac{1}{2\sqrt{\log_2 d}}+\frac{1}{2}\right)-\frac{\omega}{1+e^{-\beta\omega}}\eqqcolon W_{\mathrm{classical}}.
    \end{eqnarray}
\end{thm}
Having \( d = 2^n \) allows us to write equality in Eq.~\eqref{WLHS_Bound2}. Similar to the previous example, the second term on the right-hand side of the inequality in Eq.~\eqref{WLHS_Bound2} corresponds to the average energy of the Gibbs state with respect to the Hamiltonian defined in Eq.~\eqref{Pauli_Hamiltonian_main}. The proof of Theorem~\ref{LHS_theorem_upper_bound_work_extraction_2} is provided in Appendix~\ref{Appendix_proof_2}.

Similarly to the previous case, the inequality in Eq.~\eqref{WLHS_Bound2} can be employed as a steerability witness. In particular, whenever the extracted amount of work exceeds the classical threshold \(  W_{\text{classical}} \), we can conclude that the shared quantum state \( \hat\rho_{AB} \) between Alice and Bob possesses steerable correlations.

\subsection{Extractable Work in the Quantum Scenario}

In this section, we analyze the average extracted work defined in Eq.~\eqref{Avg_amount_of_extracted_work} within the quantum regime, where the assemblage \(\hat{\rho}_{a|x}\) is not constrained to admit a local-hidden-state (LHS) decomposition as in Eq.~\eqref{defn:unsteerability}.

\begin{thm}\label{Quantum_theorem_upper_bound_work_extraction2}
Consider the work extraction task in which Bob is allowed to quench his system to the Hamiltonians \(  H_{a|x} \) specified in Eq.~\eqref{Pauli_Hamiltonian_main}. The maximum achievable average extracted work in this quantum setting is  
\begin{equation}\label{Wquantum_Bound2}  
     W_{\mathrm{quantum}} \coloneqq \omega -\frac{\omega}{1+e^{-\beta\omega}}.  
\end{equation}
\end{thm}
\begin{proof}
We derive the upper bound on extractable work using an argument similar to the previous case. Recall that, for any quantum state, the average extractable work defined as:
\begin{eqnarray}\label{Avg_amount_of_extracted_work}
    &&W\left(\{\hat\rho_{a|x}\},\{ H_{a|x}\}\right) =  \frac{1}{|\mathcal{X}|}\sum_{a, x} p(a|x) \left[-\Tr\left( H_{a|x} \hat{\rho}_{a|x} \right) + \Tr( H_{a|x} \hat\gamma_{a|x})\right].
\end{eqnarray}
Similarly as before, average extracted amount of work can be upper bounded as follows  
\begin{equation}
\label{upper_bound_main_work_main2}
    W\left(\{\hat\rho_{a|x}\},\{ H_{a|x}\}\right) \leq \frac{1}{|\mathcal{X}|} \sum_{a, x} p(a|x) \left[ -\sigma_{\min}( H_{a|x}) + \Tr\left( H_{a|x} \hat{\gamma}_{a|x} \right) \right],
\end{equation}
where \( \sigma_{\min}(\cdot) \) denotes the smallest eigenvalue of the corresponding Hamiltonian. In the case of the Hamiltonians \(  H_{a|x} \) defined in Eq.~\eqref{Pauli_Hamiltonian_main}, we have \( \sigma_{\min}( H_{a|x}) = -\omega \). Substituting this into Eq.~\eqref{upper_bound_main_work_main2}, and noting that
\begin{equation}\label{Gibbs_2_energy}
    \frac{1}{|\mathcal{X}|}\sum_{a,x}p(a|x)\Tr( H_{a|x}\gamma_{a|x})=-\frac{\omega}{1+e^{-\beta\omega}},
\end{equation}
we obtain the upper bound:
\begin{equation}
\label{generic_upper_bound_optimality2}
    W \leq \omega - \frac{\omega}{1 + e^{-\beta \omega}}.
\end{equation}
We have provided the derivation of Eq. \eqref{Gibbs_2_energy} in appendix \ref{Gibbs_energy}. 

    Next, we show a protocol that attains such a bound. Assume the initial state $\hat{\rho}_{AB}$ of the combined system of Alice and Bob is maximally entangled state of the form
    \begin{equation}
        \hat\rho_{AB}=\frac{1}{2^n}\sum_{i=0}^{2^n-1}\sum_{j=0}^{2^n-1}\ketbra{i}{j}\otimes\ketbra{i}{j}.
    \end{equation}
    Suppose Bob selects an input \( x \) uniformly at random from the set \(\mathcal{X}\) in each run. Upon receiving \( x \), Alice performs a dichotomic projective measurement in the orthonormal basis with corresponding projector
    \begin{eqnarray}        \Pi^{(x)}_{a}:=\frac{1}{2}\left(\mathbb{I}+aA_x^{T}\right)\quad\text{where}\quad a\in\{+1,-1\},
    \end{eqnarray}
    where $``T"$ in the superscript denotes transpose. Then the post-measurement state 
    \begin{equation}\label{Pauli_assemblage_main}
        \hat{\rho}_{a|x} = \frac{1}{2^n}\left(\mathbb{I}+aA_x\right),
    \end{equation}
    is obtained with probability $1/2$. We derive Eq. \eqref{Pauli_assemblage_main} in appendix \ref{Post_mmt_Pauli}. Substituting $\hat{\tilde\rho}_{a|x}$ from Eq. \eqref{Pauli_assemblage_main} and $ H_{a|x}=-\frac{\omega}{2}\left(\mathbb{I} + a A_x\right)$ in the extractable work in Eq. \eqref{Avg_amount_of_extracted_work}, we obtain that the extractable work $ W_{\text{quantum}}$ in the quantum scenario is
\begin{equation}\label{W_quantum_main2}
     W_{\text{quantum}}=\omega-\frac{\omega}{1+e^{-\beta\omega}}.
\end{equation}
\end{proof}

\subsection{Quantum thermodynamic advantage in work extraction}

Using Theorems~\ref{LHS_theorem_upper_bound_work_extraction_2} and~\ref{Quantum_theorem_upper_bound_work_extraction2}, the ratio between the extractable work in the quantum and unsteerable (LHS) scenarios satisfies  
\begin{equation}\label{definition2}
    \frac{ W_{\text{quantum}}}{ W_{\text{LHS}}} \geq \frac{ W_{\text{quantum}}}{ W_{\text{classical}}} = \frac{\omega - \frac{\omega}{1 + e^{-\beta \omega}}}{\omega\left( \frac{1}{2\sqrt{\log_2 d}} + \frac{1}{2} \right) - \frac{\omega}{1 + e^{-\beta \omega}}} \coloneqq  \xi.
\end{equation}

The quantity \(  \xi \) captures the quantum thermodynamic advantage in the work extraction task. As in the previous scenario, a quantum advantage is achieved whenever \(  \xi > 1 \), that is, when the extractable work in the quantum case exceeds that in the classical (or LHS) scenario: \(  W_{\text{quantum}} >  W_{\text{classical}} \geq  W_{\text{LHS}} \).

From Eq.~\eqref{definition2}, a sufficient condition for observing this advantage is
\begin{equation}\label{omegadsqrtn2}
    \omega > \omega\left( \frac{1}{2\sqrt{\log_2 d}} + \frac{1}{2} \right) \quad \Longleftrightarrow \quad 1 > \frac{1}{\sqrt{\log_2 d}},
\end{equation}
which holds for all dimensions \( d > 2 \). Consequently, Eq.~\eqref{definition2} implies that the ratio \(  \xi \sim \mathcal{O}(\sqrt{\log_2 d}) \), with the quantum extractable work scaling as \(  W_{\text{quantum}} \sim \mathcal{O}(1) \), and its classical counterpart scaling as \(  W_{\text{classical}} \sim \mathcal{O}(1/\sqrt{\log_2 d}) \), consistent with Eqs.~\eqref{W_quantum_main2} and~\eqref{WLHS_Bound2}.
\section{Proof of the theorem \ref{LHS_theorem_upper_bound_work_extraction_2} of Supplementary material}\label{Appendix_proof_2}
Assume that Bob can quench his system to Hamiltonians
\begin{equation}\label{Clifford_Hamiltonian}
     H_{-1|x} = -\frac{\omega}{2} (\mathbb{I}+A_x)\quad;\quad  H_{+1|x} = -\frac{\omega}{2} (\mathbb{I}-A_x), \quad\text{for}\quad x=\{1,\ldots,n\}, 
\end{equation}
where $A_x$ are non-trivial (i.e., $A_x\neq I^{\otimes n}$), distinct Pauli strings (i.e $A_x\neq A_y$ if $x\neq y$) obeying Eq. \eqref{Eq:Anticommutation}. As $x\in\{1,\ldots, n\}$ which implies $|\mathcal{X}|=n$. Let us now bound the extracted amount of work for unsteerable correlations starting from Eq. \eqref{Avg_amount_of_extracted_work}:
   \begin{eqnarray}
    W&=&\frac{1}{|\mathcal{X}|}\sum_{a,x}p(a|x)\Big(-\Tr( H_{a|x}\hat\rho_{a|x})+\Tr(H_{a|x}\hat\gamma_{a|x})\Big)\label{exact_avg_work_gen_2}\\
    &=&\frac{1}{n}\sum_{a,x}p(a|x)\Big(-\Tr( H_{a|x}\hat\rho_{a|x})+\Tr(H_{a|x}\hat\gamma_{a|x})\Big)\\
    &=& -\frac{1}{n}\sum_{a,x}\Tr( H_{a|x}\rho_{a|x})+\frac{1}{n}\sum_{a,x}p(a|x)\Tr( H_{a|x}\hat\gamma_{a|x})\nonumber\\
    &=& -\frac{1}{n}\sum_{a,x}\int\;d\lambda p(\lambda)p(a|x,\lambda)\Tr( H_{a|x}\hat\rho_{\lambda})+ \frac{1}{n}\sum_{a,x}p(a|x)\Tr( H_{a|x}\hat\gamma_{a|x})={W}_{\text{LHS}}\label{The_important_bound_2},
\end{eqnarray}
where to write third equality, we use $\rho_{a|x}=p(a|x)\hat\rho_{a|x}$, and to write fourth equality we use Eq. \eqref{defn:unsteerability}. Using the Hamiltonian from Eq. \eqref{Clifford_Hamiltonian}, we obtain
\begin{eqnarray}\label{GE_av_GE}
     \frac{1}{n}\sum_{a,x}p(a|x)\Tr( H_{a|x}\hat\gamma_{a|x}) = \frac{1}{n}\sum_{a,x}p(a|x)\left(-\frac{\omega }{1+e^{-\beta\omega}}\right)=\left(-\frac{\omega }{1+e^{-\beta\omega}}\right).
\end{eqnarray}
The derivation of the above Eq. \eqref{GE_av_GE} is given in section \ref{Gibbs_energy}.
Next, we upper bound the first term of Eq.~\eqref{The_important_bound_2} as follows:
\begin{eqnarray}
    -\frac{1}{n}\sum_{a,x}\int\;d\lambda p(\lambda)p(a|x,\lambda)\Tr(H_{a|x}\hat\rho_{\lambda})&=&-\frac{1}{n}\Tr\left(\int\;d\lambda p(\lambda)\hat\rho_{\lambda}\sum_{a,x} p(a|x,\lambda) H_{a|x}\right)\nonumber\\
    &=& -\frac{1}{n}\Tr\left(\int\;d\lambda p(\lambda)\hat\rho_{\lambda}\left(\sum_{x} p(+1|x,\lambda) H_{+1|x}+\sum_{x}p(-1|x,\lambda)H_{-1|x}\right)\right)\nonumber\\
    &=&\frac{\omega}{n}\Tr\left(\int\;d\lambda p(\lambda)\hat\rho_{\lambda}\left(\sum_{x} p(+1|x,\lambda)\left(\frac{\mathbb{I}+A_x}{2}\right)+\sum_{x}p(-1|x,\lambda)\left(\frac{\mathbb{I}-A_x}{2}\right)\right)\right)\nonumber\\
    &=&\frac{\omega}{n}\Tr\left(\int\;d\lambda p(\lambda)\hat\rho_{\lambda}\left(\sum_{x} p(+1|x,\lambda)\left(\frac{A_x}{2}\right)+\sum_{x}p(-1|x,\lambda)\left(\frac{-A_x}{2}\right)+\sum_{x}\frac{\mathbb{I}}{2}\right)\right)\nonumber\\
    &=&\frac{\omega}{n}\Tr\left(\int\;d\lambda p(\lambda)\hat\rho_{\lambda}\left(\sum_{x} \left\{p(+1|x,\lambda)-p(-1|x,\lambda)\right\}\left(\frac{A_x}{2}\right)+\frac{n\mathbb{I}}{2}\right)\right)\nonumber\\
    &=&\frac{\omega}{n}\Tr\left(\int\;d\lambda p(\lambda)\hat\rho_{\lambda}\left(\sum_{x} \left\{p(+1|x,\lambda)-1+p(+1|x,\lambda)\right\}\left(\frac{A_x}{2}\right)+\frac{n\mathbb{I}}{2}\right)\right)\nonumber\\
    &=&\frac{\omega}{n}\Tr\left(\int\;d\lambda p(\lambda)\hat\rho_{\lambda}\left(2\sum_{x} \left\{p(+1|x,\lambda)-\frac{1}{2}\right\}\left(\frac{A_x}{2}\right)\right)\right)+\frac{\omega}{2}\nonumber\\
    &=&\frac{\omega}{n}\Tr\left(\int\;d\lambda p(\lambda)\hat\rho_{\lambda}\left(\sum_{x} \underbrace{\left\{p(+1|x,\lambda)-\frac{1}{2}\right\}}_{:=b_{x,\lambda}}A_x\right)\right)+\frac{\omega}{2}\nonumber\\
    &=& \frac{\omega}{n}\Tr\left(\int\;d\lambda p(\lambda)\hat\rho_{\lambda}\left(\sum_{x} b_{x,\lambda}A_x\right)\right)+\frac{\omega}{2}\nonumber\\&\leq& \frac{\omega}{n}\Tr\left(\int\;d\lambda p(\lambda)\hat\rho_{\lambda}\right)\Bigg\|\left(\sum_{x} b_{x,\lambda}A_x\right)\Bigg\|_{\infty}+\frac{\omega}{2}\label{Eq:Tr_ineq}\\
    &=& \frac{\omega}{n}\Bigg\|\left(\sum_{x} b_{x,\lambda}A_x\right)\Bigg\|_{\infty}+\frac{\omega}{2},\label{Eq:Final_form_Clifford}
\end{eqnarray}
where $\|\cdot\|_{\infty}$ denotes the operator norm, i.e., the largest eigenvalue of the operator $(\cdot)$. To write the inequality in Eq. \eqref{Eq:Tr_ineq}, we use $\Tr(AB)\leq \|A\|_{\infty}\Tr(B)$.  Note that $-\frac{1}{2}\leq b_{x,\lambda}=p(+1|x,\lambda)-\frac{1}{2}\leq \frac{1}{2}$.  For any hermitian operator $R$,
we have $\|R^2\|_{\infty}=\|R\|_{\infty}^2$, which allows us to write: 
\begin{equation}\label{ineq:app:last}
    \Bigg\|\left(\sum_{x} b_{x,\lambda}A_x\right)\Bigg\|_{\infty}^2 = \Bigg\|\left(\sum_{x} b_{x,\lambda}A_x\right)\left(\sum_{y} b_{y,\lambda}A_y\right)\Bigg\|_{\infty}= \left\| \sum_{x}b_{x,\lambda}^2 A_x^2 + \underbrace{\sum_{x \neq y} b_{x,\lambda} b_{y,\lambda} A_x A_y}_{=0}\right\|_\infty = \left\| \sum_x b_{x,\lambda}^2 \mathbb{I} \right\|_\infty \leq \frac{n}{4},
\end{equation}
where to write the final equality, we use the fact that $(A_xA_y+A_yA_x)=0$ when $x\neq y$. To write the second inequality we use the fact $b_{x,\lambda}\leq \frac{1}{2}$. From inequality Eq. \eqref{ineq:app:last} we can write
\begin{equation}
    \frac{\omega}{n}\Bigg\|\left(\sum_{x} b_{x,\lambda}A_x\right)\Bigg\|_{\infty}+\frac{\omega}{2} \leq \frac{\omega}{n}\sqrt{\frac{n}{4}}+\frac{\omega}{2} = \omega\left(\frac{1}{2\sqrt{n}}+\frac{1}{2}\right)=\omega\left(\frac{1}{2\sqrt{\log_2 d}}+\frac{1}{2}\right),
\end{equation}
where to write the final equality we use $d=2^n$. Thus, we can bound the extracted amount of work $W$ starting from Eq. \eqref{The_important_bound_2} as follows:

\begin{equation}
     W_{\text{LHS}} \leq \omega\left(\frac{1}{2\sqrt{n}}+\frac{1}{2}\right)-\frac{\omega}{1+e^{-\beta\omega}}=\omega\left(\frac{1}{2\sqrt{\log d}}+\frac{1}{2}\right)-\frac{\omega}{1+e^{-\beta\omega}}= W_{\text{Classical}}.
\end{equation}

\section{Calculation of average energy of the Gibbs state wrt. Hamiltonians constructed from Pauli strings}\label{Gibbs_energy}
We consider the Hamiltonian:
\begin{equation}
 H_{a|x} = -\frac{\omega}{2}(\mathbb{I} + a A_x)\quad\text{where}\quad a\in\{+1,-1\}\quad\text{and}\quad x\in\{1,\ldots,n\}
\end{equation}
with \( \omega > 0 \). Here \( A_x \) is a nontrivial Pauli strings, so \( A_x^2 = \mathbb{I} \) and \( \operatorname{Tr}(A_x) = 0 \). Since for any $x\in\{1,\ldots,n\}\;\;$, \( A_x \) has eigenvalues \( \pm 1 \). We define projectors:
\begin{equation}\label{Eq:Pauli_Proj}
\Pi_{a}^{(x)} := \frac{1}{2}(\mathbb{I} + a A_x),
\end{equation}
so that $ H_{a|x} = -\omega \Pi_{a}^{(x)}$. Then the Gibbs state can be written  as:
\begin{equation}
\hat{\gamma}_{a|x} = \frac{e^{-\beta  H_{a|x}}}{\operatorname{Tr}(e^{-\beta  H_{a|x}})}.
\end{equation}
Using the projector form of \(  H_{a|x} = -\omega \Pi_{a}^{(x)} \), we have $e^{-\beta  H_{a|x}} = e^{\beta\omega \Pi_{a}^{(x)}} = \Pi_{a}^{(x)} e^{\beta \omega} + \left(\mathbb{I}-\Pi_{a}^{(x)}\right)$. Hence, the Gibbs state becomes:
\begin{equation}\label{hatgamma}
\hat{\gamma}_{a|x} = \frac{\Pi_{a}^{(x)} e^{\beta \omega} + \left(\mathbb{I}-\Pi_{a}^{(x)}\right)}{\Tr(\Pi_{a}^{(x)}) e^{\beta \omega} + \Tr\left(\mathbb{I}-\Pi_{a}^{(x)}\right)}.
\end{equation}
We now compute the average energy:
\begin{equation}
\operatorname{Tr}( H_{a|x} \hat{\gamma}_{a|x}) = -\omega \Tr(\Pi_{a}^{(x)} \hat{\gamma}_{a|x}).
\end{equation}
Upon substituting the expression for \( \hat{\gamma}_{a|x} \) from Eq. \eqref{hatgamma} we obtain:
\begin{equation}
\operatorname{Tr}(\Pi_{a}^{(x)} \hat{\gamma}_{a|x}) = \frac{e^{\beta \omega} \operatorname{Tr}(\Pi_{a}^{(x)})}{\Tr(\Pi_{a}^{(x)}) e^{\beta \omega} + \Tr\left(\mathbb{I}-\Pi_{a}^{(x)}\right)}.
\end{equation}
Since \( A_x \) is traceless, \( \operatorname{Tr}(\Pi_{a}^{(x)}) = \operatorname{Tr}(\mathbb{I}-\Pi_{a}^{(x)}) = d/2 \), where \( d = 2^n \). Thus:
\begin{equation}
\operatorname{Tr}(\Pi_a^{(x)} \hat{\gamma}_{a|x}) = \frac{e^{\beta \omega} \cdot \frac{d}{2}}{e^{\beta \omega} \cdot \frac{d}{2} + \frac{d}{2}} = \frac{e^{\beta \omega}}{1 + e^{\beta \omega}}.
\end{equation}
Therefore, the average energy is:
\begin{equation}
\operatorname{Tr}(H_{a|x} \hat{\gamma}_{a|x}) = -\omega \cdot \frac{e^{\beta \omega}}{1 + e^{\beta \omega}} = -\frac{\omega}{1 + e^{-\beta \omega}},
\end{equation}
which allows us to write:
\begin{eqnarray}
     \frac{1}{n}\sum_{a,x}p(a|x)\Tr( H_{a|x}\hat\gamma_{a|x}) = \frac{1}{n}\sum_{a,x}p(a|x)\left(-\frac{\omega }{1+e^{-\beta\omega}}\right)=\left(-\frac{\omega }{1+e^{-\beta\omega}}\right).
\end{eqnarray}
\section{Derivation of Post measurement state }\label{Post_mmt_Pauli}
We consider the bipartite maximally entangled state on \( \mathbb{C}^{2^n} \otimes \mathbb{C}^{2^n} \),
\begin{equation}
    \hat{\rho}_{AB} = |\Phi^+\rangle\langle\Phi^+|, \quad \text{where } |\Phi^+\rangle = \frac{1}{\sqrt{2^n}} \sum_{i=0}^{2^n - 1} |i\rangle \otimes |i\rangle.
\end{equation}
Suppose a projective measurement is performed on subsystem \( A \) described by  the projector $R^{(x)}_a$ where 
\begin{equation}
    R^{(x)}_a = \frac{1}{2}\left( \mathbb{I} + a A^{T}_x \right), \quad a \in \{+1, -1\},
\end{equation}
where \( A_x \) is a nontrivial \( n \)-qubit Pauli strings, and $T$ in superscript denotes transposition. It is easy to see that $R^{(x)}_a$ is same as the transposition of $\Pi^{(x)}_a$ from Eq. \eqref{Eq:Pauli_Proj}. Then, the post-measurement state on subsystem \( B \) is given by
\begin{equation}
    \hat{\rho}^{(a|x)}_B = \frac{\operatorname{Tr}_A\left[ \left( R^{(x)}_a \otimes \mathbb{I} \right) \hat{\rho}_{AB} \right]}{\operatorname{Tr}\left[ \left( R^{(x)}_a \otimes \mathbb{I} \right) \hat{\rho}_{AB} \right]}.
\end{equation}
Using the following equality for maximally entangled states and linear operator $M$, specifically
\begin{equation}
    \operatorname{Tr}_A\left[ \left( M^{T} \otimes \mathbb{I} \right) |\Phi^+\rangle\langle\Phi^+| \right] = \frac{1}{2^n} M,
\end{equation}
we obtain
\begin{equation}
    \hat{\rho}^{(a|x)}_B = \frac{1}{2^{n}} \left( \Pi^{(x)}_a \right) 
    = \frac{1}{2^{n}} \left( \mathbb{I} + a A_x \right).
\end{equation}
This allow us to obtained the post-measurement state given in Eq. \eqref{Pauli_assemblage_main}.
\end{document}